\documentclass[letterpaper, 10 pt, conference]{ieeeconf}  

\IEEEoverridecommandlockouts                              

\usepackage{amsmath} 
\usepackage{amssymb}  

\usepackage{amsthm}
\usepackage{xcolor}
\usepackage{mathtools,stmaryrd}
\usepackage{empheq}
\usepackage[hidelinks]{hyperref}
\usepackage{algorithm}
\usepackage{algpseudocodex}
\let\labelindent\undefined
\usepackage{enumitem}
\usepackage{lipsum}
\usepackage{subcaption}
\usepackage{booktabs}
\usepackage{caption}
\newcommand{\ie}{\emph{i.e.}}
\newcommand{\eg}{\emph{e.g.}}

\newcommand{\ox}{{\mathring{x}}}

\newcommand{\co}{\operatorname{co}}

\newcommand{\immrax}{\texttt{immrax}}

\newcommand{\dbrak}[1]{\llbracket #1 \rrbracket}

\newcommand{\normo}[1]{\dbrak{#1}}

\newcommand{\R}{\mathbb{R}}

\newcommand{\calI}{\mathcal{I}}

\newcommand{\calL}{\mathcal{L}}
\newcommand{\calM}{\mathcal{M}}

\newcommand{\calO}{\mathcal{O}}

\newcommand{\calR}{\mathcal{R}}
\newcommand{\calS}{\mathcal{S}}
\newcommand{\calT}{\mathcal{T}}
\newcommand{\calU}{\mathcal{U}}
\newcommand{\calV}{\mathcal{V}}

\newcommand{\calX}{\mathcal{X}}

\newcommand{\bfI}{\mathbf{I}}

\newcommand{\bfP}{\mathbf{P}}

\newcommand{\ul}[1]{\underline{#1}}

\newcommand{\ulT}{\ul{T}}

\newcommand{\ol}[1]{\overline{#1}}

\newcommand{\olT}{\ol{T}}

\newcommand{\ringz}{\mathring{z}}

\theoremstyle{plain}

\newtheorem{lemma}{Lemma}
\newtheorem{theorem}{Theorem}

\theoremstyle{definition}

\theoremstyle{remark}
\newtheorem{remark}{Remark}

\title{\LARGE \bf
Differentiable Invariant Sets for Hybrid Limit Cycles \\
with Application to Legged Robots
}

\author{
Varun Madabushi, Akash Harapanahalli, Samuel Coogan, Maegan Tucker
\thanks{$^{1}$Varun Madabushi, Akash Harapanahalli, Samuel Coogan, and Maegan Tucker are with the School of Electrical and Computer Engineering, Georgia Institute of Technology, Atlanta, GA, 30332, USA. \{vmadabushi,aharapan,sam.coogan, mtucker\}@gatech.edu}%
}

\newcommand{\fol}{f}
\newcommand{\fcl}{f_{\mathrm{cl}}}
\newcommand{\Deltaol}{\Delta}
\newcommand{\Deltacl}{\Delta_{\mathrm{cl}}}

\begin{document}

\maketitle
\thispagestyle{empty}
\pagestyle{empty}

\begin{abstract}
For hybrid systems exhibiting periodic behavior, analyzing the invariant set containing the limit cycle is a natural way to study the robustness of the closed-loop system.
However, computing these sets can be computationally expensive, especially when applied to contact-rich cyber-physical systems such as legged robots.
In this work, we extend existing methods for overapproximating reachable sets of continuous systems using parametric embeddings to compute a forward-invariant set around the nominal trajectory of a simplified model of a bipedal robot.
Our three-step approach (i) computes an overapproximating reachable set around the nominal continuous flow, (ii) catalogs intersections with the guard surface, and (iii) passes these intersections through the reset map.
If the overapproximated reachable set after one step is a strict subset of the initial set, we formally verify a forward invariant set for this hybrid periodic orbit.
We verify this condition on the bipedal walker model numerically using \immrax, a JAX-based library for parametric reachable set computation, and use it within a bi-level optimization framework to design a tracking controller that maximizes the size of the invariant set.
\end{abstract}

\section{Introduction}

Forward invariance plays a central role in analyzing the robustness of nonlinear dynamical systems \cite{goebel2009hybrid, sastry2013nonlinear}.
However, for hybrid systems (i.e., those that exhibit both continuous dynamics and discrete transitions), accurately characterizing invariant sets is particularly challenging due to the presence of discontinuities and switching behaviors.
These systems arise naturally in applications such as legged locomotion \cite{westervelt2003hybrid}, robotic manipulation with intermittent contact \cite{woodruff2017planning}, and mechanical systems with impacts or frictional constraints \cite{fierro2002hybrid}.

The study of invariant sets and regions of attraction (RoAs) have been approached through a wide range of techniques, including Lyapunov-based methods \cite{khalil2002nonlinear, sastry2013nonlinear}, sum-of-squares (SOS) programming \cite{topcu2008local, manchester2011transverse, chesi2011domain}, and Hamilton–Jacobi (HJ) formulations \cite{mitchell2005time}.
However, the practical applications have remained limited due to the computational complexity of the existing approaches and thus their limitation towards real-world high-dimensional systems.
In 2011, Manchester \cite{manchester2011transverse} presented one of the first algorithmic approaches for computing inner estimates of the regions of attraction of limit cycles of a nonlinear hybrid system using iterative sum-of-squares (SoS) programming.
Their approach formulated Lyapunov inequalities as semidefinite programs and was later demonstrated on the compass-gait walker, a low-dimensional (2-DOF, four states) nonlinear hybrid system \cite{manchester2011regions}.
Despite its theoretical elegance, SoS programming is known to scale poorly with system dimension, as the polynomial degree and number of decision variables grow combinatorially.
Consequently, SoS-based methods are limited to relatively low-dimensional systems, typically below ten state variables, restricting their practical applicability to complex models. 

In 2022, Choi et. al. \cite{choi2022computation} instead cast the RoA computation problem as one of backward reachable set computation within the Hamilton–Jacobi (HJ) reachability framework.
Their work generalized HJ formulations to accommodate hybrid dynamics, successfully recovering larger and more accurate RoA estimates compared to SoS-based approaches.
However, the curse of dimensionality inherent in HJ partial differential equations remains a major bottleneck, as the computational complexity grows exponentially with the dimension of the state space.

To address this challenge of scalability, our work instead quickly computes a parametric overapproximation of the reachable set using a set propagation approach.
Approaches in this category \cite{althoff_set_2021} include generator-based set representations like zonotopes~\cite{althoff_introduction_2015}, or Taylor models bounding the dynamics and/or flow~\cite{bogomolov_juliareach_2019}, and promise scalability to high-dimensional dynamical systems \cite{jafarpour2024Interval}.
In this work, we use linear inclusions to propagate ellipsoidal reachable sets.
This is achieved through \immrax~\cite{harapanahalli2024immrax}, an efficient implementation of interval analysis \cite{jaulin_applied_2001} that exploits the hardware-accelerated and automatic differentiation capabilities of JAX \cite{jax2018github} to enable scalable set propagation.
However, this approach is currently limited to continuous-time systems.
Extending this framework to hybrid systems with discrete impact events addresses new challenges: 1) accurately detecting and enclosing impact times within a continuous set evolution, 2) propagating sets across discrete reset maps that can be highly nonlinear, and 3) maintaining tight enclosures to avoid exponential growth through mode transitions. 

In this work, we build upon the \immrax~ toolbox to address these challenges.
The contributions of this work include the following:
\begin{enumerate}
    \item We extend the theory of parametric reachability to identify conditions under which the parametric reachable set of a hybrid system is forward invariant.
    \item We introduce a procedure using interval analysis to efficiently verify the forward invariance of a given set.
    \item We demonstrate this methodology on a simplified planar walker model and showcase how the differentiability of \immrax~ can be leveraged for control design in a bi-level optimization problem.
\end{enumerate}



\section{Preliminaries}

\subsection{Hybrid System Model}

In this work, we will consider an autonomous hybrid system with one continuous domain and one discrete impact event.
It has been shown in prior work that multi-domain hybrid systems can be treated as an extension of this simpler representation \cite{reher2020algorithmic}. 
Explicitly, the open-loop hybrid control system is represented as:
\begin{subequations} \label{eq:ol_hybridsys}
\begin{align}
    \dot{x} &= \fol(x,u), &\quad  x^-&\in\calX\setminus\calS, \label{eq:ol_hybridsys:cont} \\
    x^+ &= \Deltaol (x^-,v) &\quad x^-&\in\calS, \label{eq:ol_hybridsys:disc}
\end{align}
\end{subequations}
where $x\in\calX\subseteq\R^n$ is the system state, $u\in\calU\subseteq\R^p$ is a control input, $\fol:\calX\times\calU\to\R^n$ defines a parameterized $C^1$ vector field describing the continuous flow until it reaches the guard surface (also known as the switching surface) $\calS \subseteq\calX$, at which point the system undergoes an instantaneous jump from $x^-$ to $x^+$ via the reset map $\Delta:\calX \times \calV \to \calX$ under the discrete input $v\in\calV\subseteq\R^q$. 

Under the control laws $u = \pi^c(x)$ and $v = \pi^d(x^-)$, the closed-loop hybrid control system is represented as:
\begin{subequations} \label{eq:cl_hybridsys}
\begin{align}
  \dot{x}&=\fcl(x) := \fol(x,\pi^c(x)),&\quad x^-&\in \calX \setminus \calS \label{eq:cl_hybridsys:cont} \\
  x^+&=\Deltacl(x^-) := \Deltaol(x^-,\pi^d(x^-))&\quad x^-&\in\calS \label{eq:cl_hybridsys:disc}
\end{align}
\end{subequations}
A solution $x(t)$ of the closed-loop hybrid system is a right-continuous function satisfying~\eqref{eq:cl_hybridsys}, where $x^-(t) = \lim_{\tau\nearrow t} x(\tau)$ and $x^+(t) = \lim_{\tau\searrow t} x(\tau)$.

We assume that the guard surface is given as follows: for some $C^1$ smooth $h:\calX\to\R$,
\begin{align}
    \calS = \{x\in\calX : h(x) = 0, \ \dot{h}(x) < 0\},
\end{align}
where $\dot{h}(x) = \calL_{f_{cl}} h(x)$.
We also notationally set $\{h\sim0\} = \{x\in\calX : h(x) \sim 0\}$ for the relations $\sim\ \in\{=,\leq,\geq,<,>\}$.
Under mild conditions, the closed-loop hybrid system~\eqref{eq:cl_hybridsys} has a unique right-continuous trajectory, which we assume does not have Zeno phenomenon.

A periodic orbit of the closed-loop hybrid system is a solution $x(t)$ of \eqref{eq:cl_hybridsys} that is periodic for some $T$, i.e., $x(t) = x(t + T)$. 
This periodicity condition can also be written using flow notation as:
\begin{align} 
\Delta(\varphi_{T_I(x^*)}(x^*)) = x^*,
\label{eq:periodicity}
\end{align}
where $x^* \in \Delta(\mathcal{S})$ is a fixed point lying on the image of the guard surface, $\varphi_t(x_0)$ denotes in flow notation the solution of our closed-loop continuous dynamics~\eqref{eq:cl_hybridsys:cont} at time $t \in \R_{\geq0}$ given the initial condition $x_0$, and $T := T_I(x^*)$ for the time-to-impact function $T_I: \calX \to \R \cup \{\infty\}$,
\begin{align}
    T_I(x) = \inf \{ t \geq 0 \mid \varphi_t(x) \in \mathcal{S} \},
    \label{eq:time-to-impact}
\end{align}
where we use the convention $\inf\emptyset = \infty$. 
The flow notation can also be used to denote an associated periodic orbit:
\begin{align}
    \mathcal{O} := \{ \varphi_t(x^*) \in \calX \mid 0 \leq t \leq T_I(x^*) = T \}
    \label{eq:orbit}
\end{align}
where \eqref{eq:orbit} is periodic if $x^*$ satisfies \eqref{eq:periodicity}. 
Stability of this periodic orbit is typically analyzed using the Poincar\'e return map \cite{morris2005restricted}. In this work, we will instead use a generalization of this Poincar\'e return map we refer to as the step-to-step map, $F: \calX \to \Delta(\calS)$, defined as:
\begin{align}
    F(x) := \Delta(\varphi_{T_I(x)}(x)).
    \label{eq:step-to-step}
\end{align}
We note that $F$ is only well-defined for points $x\in\calX$ with finite time-to-impact $T_I(x)<\infty$.
Restricting $F$ to the domain $\Delta(\calS)$ would recover the traditional definition of a Poincar\'e map with the Poincar\'e surface $\Delta(\calS)$. This step-to-step map transforms the hybrid system into a discrete-time system:
\begin{align}
    x^+_{k+1} = F(x_k^+), \quad k = 0, 1, \dots
\end{align}
with $x^+ \in \Delta(\mathcal{S})$ being states that lie on the image of the guard (typically the $+$ is used to denote that the states are \textit{post}-impact). Finally, the stability of the fixed point $x^*$ is analyzed by linearizing about $x^*$:
\begin{align}
(x_{k+1}-x^*) \approx A (x_k-x^*):= DF(x^*) (x_k-x^*) \label{eq:linearization}
\end{align}
and computing the eigenvalues of $A$. If the eigenvalues lie strictly within the unit circle ($|\lambda(A)| < 1$), then the hybrid limit cycle is locally asymptotically stable.



\subsection{Function and Dynamics Abstraction via Linear Inclusions and Normotopes}

In this subsection, we review the approach developed in \cite{harapanahalli_LDI_contraction,harapanahalli2025normotope} which use linear inclusions to build a parametric embedding system bounding the reachable set from a norm ball of possibly varying shaping matrix.

\paragraph*{Linear Inclusions}
Suppose we are analyzing a $C^1$ function $g : \R^n\to\R^m$.
A \emph{linear inclusion} encompassing the error behavior of $g$ is an inclusion where for every $x\in\calX$, for prespecified $\ox\in\calX\subseteq\R^n$,
\begin{align} \label{eq:linear_inclusion}
    g(x) - g(\ox) \in \calM (x - \ox),
\end{align}
where $\calM\subseteq\R^{m\times n}$ is a set of matrices.
There are many ways to obtain matrix sets satisfying~\eqref{eq:linear_inclusion}.
For example, if $\overline{\co}\{\frac{\partial g}{\partial x}(x)\} \subseteq \calM$ for every $x\in\calX$ (where $\ol{\co}$ is the closed convex hull), then~\eqref{eq:linear_inclusion} holds for every $x\in\calX$ \cite[Prop. 1]{harapanahalli_LDI_contraction}.

We briefly discuss an algorithmic method from~\cite[Cor. 1]{harapanahalli_LDI_contraction} to obtain $\calM$ using interval analysis on the first partial derivatives of $g$. 
Suppose we are given an interval set $X_1\times\cdots\times X_n\subseteq\R^n$. 
Then the following implication holds, for fixed $\ox\in X_1\times\cdots\times X_n$,
\begin{subequations} \label{eq:mixed_jacobian}
\begin{align} 
    &\frac{\partial g_i}{\partial x_j} (X_1,\dots,X_j,\mathring{x}_{j+1},\dots,\mathring{x}_n) \subseteq [\calM]_{ij}  \\
    &\implies g(x) - g(\mathring{x}) \in [\calM] (x - \mathring{x}),
\end{align}
\end{subequations}
for every $x\in X_1\times\cdots X_n$, where the RHS is evaluated using interval matrix multiplication~\cite{jaulin_applied_2001}.
The set $[\calM]$ in~\eqref{eq:mixed_jacobian} is called the \emph{mixed Jacobian matrix}~\cite[Sec. 2.2.4]{jaulin_applied_2001} since it mixed inputs to the Jacobian matrix between the interval $X$ and the centering point $\ox$.

The expression is further simplified by replacing the interval matrix $[\calM]$ with the convex hull of a finite set of corners satifying $[\calM] \subseteq \co\{M_i\}_i$. For example, if the interval matrix $[\calM]$ has $4$ non-singleton entries, then there are $2^4 = 16$ corners $M_i^x$ to consider.
In \immrax, this procedure is fully automated using automatic differentiation using the \verb|mjacM| transform.

\paragraph*{Normotope Embeddings}

Between two normed vector spaces $(\R^n,\|\cdot\|_a)$, $(\R^m,\|\cdot\|_b)$, let $\|A\|_{a\to b} = \max_{v : \|v\|_a = 1} \|Av\|_b$ denote the induced matrix norm on $\R^{m\times n}$.
For $A\in\R^{n\times n}$, let $\mu(A) = \lim_{h\searrow 0} \frac{1}{h}\|I + hA\|_{a\to a}$ denote the logarithmic norm (also called matrix measure).

Given a norm $\|\cdot\|$ on $\R^n$, a \emph{normotope} is the set
\begin{align}
    \normo{\ox,\alpha,y} := \{x\in\R^n : \|\alpha(x - \ox)\| \leq y\}
\end{align}
where $\ox\in\R^n$ is the \emph{center}, $\alpha\in GL(n)$ is the square invertible $n\times n$ \emph{shape matrix}, and $y>0$ is the \emph{offset}.
The approach described in the previous subsection algorithmically constructs a \emph{linear differential inclusion} (LDI) encompassing the error dynamics of the closed-loop continuous system~\eqref{eq:cl_hybridsys:cont},
\begin{align} \label{eq:LDI}
    \fcl(x) - \fcl(\ox) \in \co\{M_i\}_i (x - \ox),
\end{align}
valid over a particular input set of consideration. 
Following the treatment from \cite{harapanahalli2025normotope}, we build a new dynamical system, called a \emph{parametric embedding system}, by flowing a center $\ox$ according to the nominal dynamics of the closed-loop system, a shaping matrix $\alpha$ according to the adjoint of the linearization about $\ox$, and an offset according to a logarithmic norm bound over the various corners of the linear differential inclusion,
\begin{subequations}\label{eq:normo_embed}
\begin{align} 
    \dot{\ox} &= \fcl(\ox), \\
    \dot{\alpha} &= -\alpha D\fcl(\ox), \\
    \dot{y} &= \max_{i} \left(\mu(\alpha (M_i(t) - D\fcl(\ox)) \alpha^{-1})\right) y.
\end{align}
\end{subequations}
The key property of this embedding system, as shown in~\cite[Thm. 1]{harapanahalli2025normotope} is the following reachable set guarantee: for any $x_0\in\normo{\ox_0,\alpha_0,y_0}$,
\begin{align}
    \varphi_t(x_0) \in \normo{\ox(t),\alpha(t),y(t)},
\end{align}
where $(\ox(t),\alpha(t),y(t))$ is the trajectory of the parametric embedding system~\eqref{eq:normo_embed},
provided the LDI~\eqref{eq:LDI} holds with the matrices $\{M_i(t)\}_i$ for every $x\in\normo{\ox(t),\alpha(t),y(t)}$ for every $t$.



\section{Reachability for Hybrid Periodic Orbits}
This section will derive a theoretical guarantee that the set identified through parametric reachability is a forward invariant set for the hybrid system. We will later demonstrate the practical implementation of obtaining this set in Sec. \ref{sec:implementation}. 


\subsection{Verifying Forward Invariant Tubes via Parametric Reachability}

In Theorem \ref{thm:forward_invariance} below, we apply the following intuitive procedure to verify forward invariant tubes: 
(i) start with a periodic nominal trajectory given by fixed point $x^* = F(x^*)$, impacting the guard at time $T:=T_I(x^*)$, and a normotope centered around its initial condition;
(ii) compute a reachable tube by flowing the normotope embeddding system \eqref{eq:normo_embed} entirely past the guard to ensure every point in the normotope tube resets;
(iii) catalog all intersections between the computed normotope tube and the guard surface into a set $\calI$, and ensure every point in $\calI$ resets into a subset of the initial normotope.



\begin{figure}
    \centering
    \includegraphics[width=\linewidth]{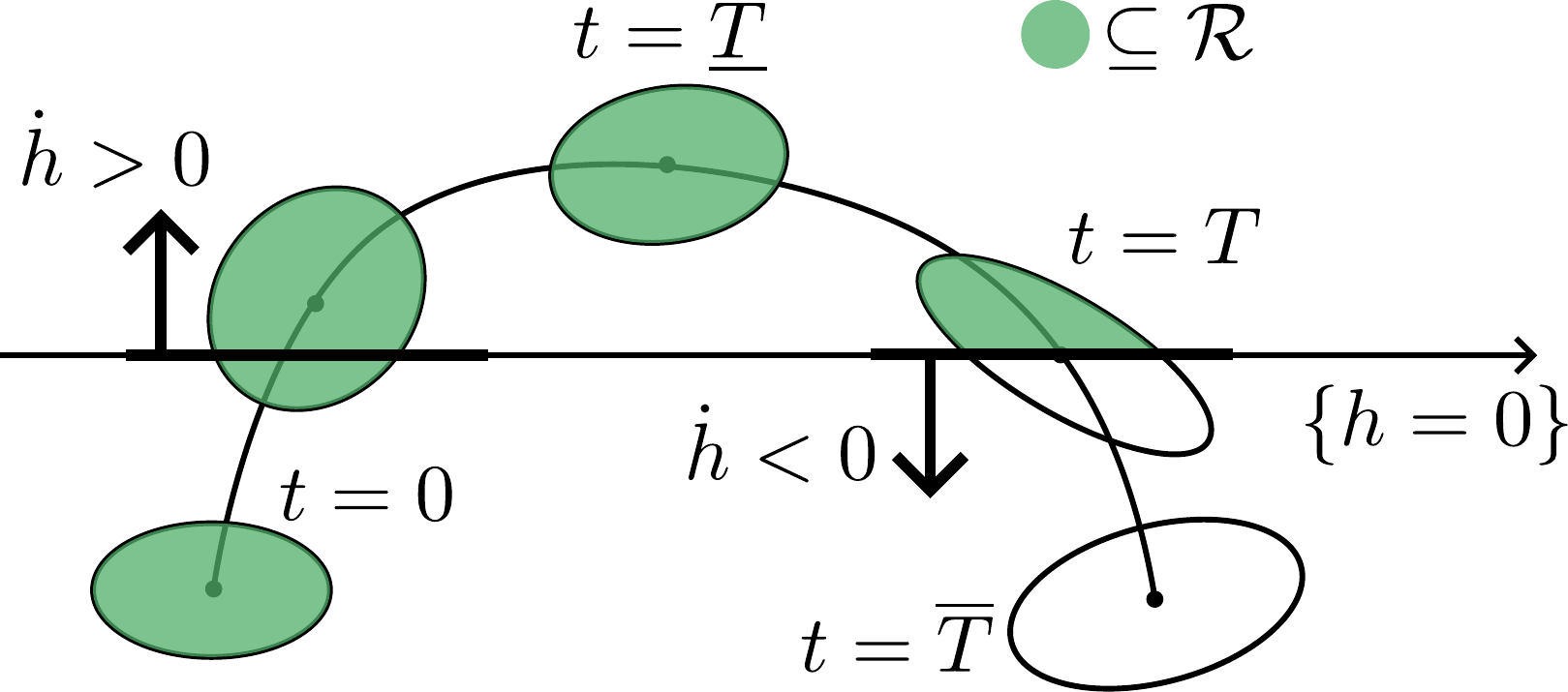}
    \caption{A visualization of the conditions supposed in Theorem \ref{thm:forward_invariance}. While the initial set can start with points in $\{h < 0\}$, we assume there exists a time $\ulT\in[0,\olT)$ where the normotope is contained in $\{h>0\}$.
    Before $\ulT$, trajectories may intersect $\{h=0\}$ as long as $\{\dot{h}>0\}$, avoiding impact with the switching surface $\calS = \{h = 0\}\cap\{\dot{h} < 0\}$.
    By assumption, any intersection with $\{h=0\}$ from $\ulT$ to $\olT$ have $\{\dot{h} < 0\}$, guaranteeing they belong to the guard surface $\calS$.
    Finally, we continue flowing the normotope embedding until the normotope is contained in $\{h < 0\}$ to ensure every point resets.
    }
    \label{fig:theorem1}
\end{figure}

\begin{theorem} \label{thm:forward_invariance}
Let $x^*$ be a fixed point of the step-to-step map $F$ from~\eqref{eq:step-to-step}.
Let $[0,\olT]\ni t\mapsto(\ox(t),\alpha(t),y(t))$ be the embedding system trajectory \eqref{eq:normo_embed} associated to the continuous flow~\eqref{eq:cl_hybridsys:cont} from initial condition $(x^*,\alpha_0,1)$ for some $\olT > T_I(x^*)$,
and for each $t\in[0,\olT]$, let 
\begin{align*}
    \calI_t := \{h = 0\} \cap \normo{\ox(t),\alpha(t),y(t)}.
\end{align*}
Suppose for some $\ulT \in [0,\olT)$ the following conditions hold:
\begin{enumerate}[label=(\alph*)]
    \item \label{hyp:above_guard}$h(\normo{\ox(\ulT),\alpha(\ulT),y(\ulT)}) \subseteq (0,\infty)$ (at time $\ulT$, the set is above $\{h = 0\}$); 
    \item \label{hyp:below_guard}$h(\normo{\ox(\olT),\alpha(\olT), y(\olT)}) \subseteq (-\infty,0)$ (at time $\olT$, the set is below $\{h = 0\}$); 
    \item \label{hyp:transversality_below} $\dot{h}(\calI_t) \subseteq (0,\infty)$ for every $t\in[0,\ulT]$ (transversality from below until time $\ulT$);
    \item \label{hyp:transversality_above} $\dot{h}(\calI_t) \subseteq (-\infty,0)$ for every $t\in[\ulT,\olT]$ (transversality from above after time $\ulT$).
\end{enumerate}
If
\begin{align*}
    \gamma := \sup_{x\in\calI_t,t\in[\ulT,\olT]} \|\alpha_0 (\Delta (x) - x^*)\| \leq 1,
\end{align*}
then \begin{align*}
    \calR = &\bigcup_{t\in[0,\ulT]} \normo{\ox(t),\alpha(t),y(t)} \\
    &\quad \cup \bigcup_{t\in[\ulT,\olT]} (\normo{\ox(t),\alpha(t),y(t)} \setminus \{h < 0\})
\end{align*}
is a forward invariant set for the closed-loop hybrid system~\eqref{eq:cl_hybridsys} containing $\calO$, and the step-to-step map is well-defined on $\calR$, satisfying $F(\calR) \subseteq \normo{x^*,\alpha_0,\gamma}$.
\end{theorem}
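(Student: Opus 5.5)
The plan is to follow each trajectory starting in $\calR$ through one full continuous phase and one reset. The reachable-set guarantee of the embedding system~\eqref{eq:normo_embed} (\cite[Thm. 1]{harapanahalli2025normotope}) confines the continuous phase. Hypotheses \ref{hyp:above_guard}--\ref{hyp:transversality_above} locate the first impact time. The bound $\gamma\le 1$ then sends the post-impact state back into $\normo{x^*,\alpha_0,1}=\normo{\ox(0),\alpha(0),y(0)}\subseteq\calR$. Forward invariance of $\calR$ follows from these facts together with invariance along continuous segments. Containment $\calO\subseteq\calR$ is immediate: $x^*$ lies in the initial normotope, $\varphi_t(x^*)=\ox(t)$, and on the orbit $h\ge 0$ up to $T_I(x^*)$. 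I also need $T_I(x^*)\ge \ulT$, which follows from \ref{hyp:above_guard} and \ref{hyp:transversality_below}, since crossings before $\ulT$ are upward and so are not in $\calS$.

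\textbf{Step 1 (initial normotope).} Take $x_0\in\normo{x^*,\alpha_0,1}$. While the flow is continuous, $x(t):=\varphi_t(x_0)\in\normo{\ox(t),\alpha(t),y(t)}$. Consider $g(t)=h(x(t))$ on $[0,\ulT]$. At any zero of $g$ we have $x(t)\in\calI_t$, so $g'(t)=\dot h(x(t))>0$ by \ref{hyp:transversality_below}. Hence no point of $\calS$ is met on $[0,\ulT]$, and by \ref{hyp:above_guard} $g(\ulT)>0$. On $[\ulT,\olT]$, $g$ starts positive and is negative at $\olT$ by \ref{hyp:below_guard}. So there is a first zero $t^\star\in(\ulT,\olT)$, and at it $\dot h<0$ by \ref{hyp:transversality_above}, so $x(t^\star)\in\calS$. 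This gives $T_I(x_0)=t^\star$. For $t\in[\ulT,t^\star]$ we have $h(x(t))\ge 0$, hence $x(t)\in\calR$. Moreover $x(t^\star)\in\calI_{t^\star}$, so $\|\alpha_0(\Delta(x(t^\star))-x^*)\|\le\gamma\le 1$. Therefore $F(x_0)\in\normo{x^*,\alpha_0,\gamma}\subseteq\normo{x^*,\alpha_0,1}$.

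\textbf{Step 2 (general points of $\calR$).} A point $z\in\calR$ lies in $\normo{\ox(s),\alpha(s),y(s)}$ for some $s$, with $h(z)\ge0$ if $s\ge\ulT$. This step is the main obstacle. The tube guarantee is stated for initial conditions at time $0$, so I would reapply \cite[Thm. 1]{harapanahalli2025normotope} from initial time $s$. The embedding system is time-invariant in the sense that the LDI~\eqref{eq:LDI} hypothesis is required pointwise along the tube, so the same proof yields $\varphi_{\tau}(z)\in\normo{\ox(s+\tau),\alpha(s+\tau),y(s+\tau)}$ for $s+\tau\le\olT$. Running the argument of Step 1 on $g(\tau)=h(\varphi_\tau(z))$ starting from time $s$ then works as follows. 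If $s<\ulT$, the argument is unchanged. If $s\ge\ulT$, then $g(0)\ge 0$. Either $g(0)=0$, in which case $z\in\calI_s$ with $\dot h(z)<0$, so $z\in\calS$ and resets immediately. Otherwise $g(0)>0$, and the first zero again lies in $\calS$ before $\olT$. In every case the trajectory stays in $\calR$ until impact, $T_I(z)<\infty$, and $F(z)\in\normo{x^*,\alpha_0,\gamma}$.

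\textbf{Step 3 (conclusion).} After the reset the state lies in the initial normotope, so Step 1 applies again. Induction over impacts, using the no-Zeno assumption so that the impacts exhaust all time, gives $x(t)\in\calR$ for all $t\ge0$. This proves forward invariance of $\calR$, well-definedness of $F$ on $\calR$, and $F(\calR)\subseteq\normo{x^*,\alpha_0,\gamma}$.

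A subtle point to check carefully is tangential zeros of $g$ and the right-continuity convention at $h=0$. Strict transversality in \ref{hyp:transversality_below}--\ref{hyp:transversality_above} rules out tangential zeros, because every zero of $g$ is a transversal crossing.
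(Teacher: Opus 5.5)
Your proposal is correct and follows essentially the same argument as the paper. You restart the normotope tube guarantee from the time $s$ at which the point lies in the tube, use conditions (a)--(b) and continuity to get a zero of $h$ in $[\ulT,\olT]$, use (c)--(d) to show the first such zero is the impact time and lies in $\calS$, and then use $\gamma\le 1$ to return to $\normo{x^*,\alpha_0,1}$ and induct. You also spell out points the paper leaves implicit: continuous arcs stay in $\calR$ until impact, $\calO\subseteq\calR$, and the edge case $h(z)=0$ with $z\in\calI_s$, $s>\ulT$, which resets immediately.
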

\begin{proof}
Let $\calR_1 := \bigcup_{t\in[0,\ulT]} \normo{\ox(t),\alpha(t),y(t)}$, $\calR_2 := \bigcup_{t\in[\ulT,\olT]} \normo{\ox(t),\alpha(t),y(t)} \setminus \{h < 0\}$, noting $\calR = \calR_1\cup\calR_2$.
Then for $x_0\in\calR$, one of the following must be true: (i) there exists a $\tau\in[0,\ulT]$ such that $x_0\in\normo{\ox(\tau),\alpha(\tau),y(\tau)}$, (ii) there exists a $\tau\in[\ulT,\olT]$ such that $x_0\in\normo{\ox(\tau),\alpha(\tau),y(\tau)} \setminus \{h < 0\}$.
In either case, the inclusion $\varphi_t(x_0) \in \normo{\ox(\tau+t),\alpha(\tau+t),y(\tau+t)}$ holds by \cite[Thm. 1]{harapanahalli2025normotope}.

Fix $x_0\in\calR$.
If $x_0\in\calR_1$, Condition \ref{hyp:above_guard} implies at $T_0 = \ulT - \tau$, $h(\varphi_{T_0}(x_0)) > 0$.
If $x_0\in\calR_2$, we have $h(\varphi_{T_0}(x_0)) \geq 0$ for $T_0 = 0$ by assumption.
Condition \ref{hyp:below_guard} implies $h(\varphi_{\olT - \tau}(x_0)) < 0$.
Therefore, by continuity there exists $T_1\in[T_0,\olT - \tau)$ such that $h(\varphi_{T_1}(x_0)) = 0$. 
Since $T_1 + \tau\in[\ulT,\olT]$, Condition \ref{hyp:transversality_above} implies $\dot{h}(\varphi_{T_1}(x_0)) < 0$, and therefore $\varphi_{T_1}(x_0)\in\calS$.

Next, the time-to-impact \eqref{eq:time-to-impact} satisfies $T_I(x_0) \geq \ulT - \tau$, since otherwise, $\dot{h}(\varphi_{T_I(x_0)}(x_0)) < 0$ contradicts Condition \ref{hyp:transversality_below} since $\varphi_{T_I(x_0)}(x_0) \in \calI_{T_I(x_0) + \tau}$.
Thus, $T_I(x_0) + \tau \in [\ulT, T_1] \subseteq[\ulT,\olT]$ by definition of the $\inf$.
Condition \ref{hyp:transversality_above} implies that $\dot{h}(\varphi_{T_I(x_0)}(x_0)) < 0$; therefore, $\varphi_{T_I(x_0)}(x_0) \in \calS$. 

Finally, by definition of $\gamma$, since $\varphi_{T_I(x_0)}(x_0)\in\calI_{T_I(x_0)  + \tau}$, 
\begin{align*}
    \|\alpha_0 (F(x_0) - x^*)\| = \|\alpha_0(\Delta(\varphi_{T_I(x_0)}(x_0)) - x^*)\| \leq \gamma.
\end{align*}
Thus $F(x_0) \in \normo{x^*,\alpha_0,\gamma} \subseteq \normo{x^*,\alpha_0,1} \subseteq \calR$ since $\gamma \leq 1$.
Proceeding by induction, $\calR$ is forward invariant.
\end{proof}


\begin{remark}[Region of Attraction]
While we expect the set $\calR$ from  Theorem~\ref{thm:forward_invariance} to be a region of attraction of $\calO$---especially for the case of ellipsoids with a linear guard---in its current form we have only verified that the set $\calR$ is a forward invariant set containing the orbit.
We leave this as an open question for future work.
\end{remark}

\subsection{Cataloging Guard Intersections: Affine Subspaces and $\ell_2$-Normotopes}

Suppose we consider $\ell_2$-normotopes (which are ellipsoids centered at $\ox$ with shaping matrix $P = \alpha^T\alpha/y^2$), and a guard surface defined by affine function $h(x) = a^T x + b$.
Letting $B\in\R^{n\times(n-1)}$ be a basis for the orthogonal complement to $a$, \emph{i.e.}, $a^TB = 0$ and $B$ has full rank, we can also write $\{h = 0\} = \{Bz + x' : z\in\R^{n-1}\}$, given a distinguished point $x'\in\{h = 0\}$ (\eg, $x' = -\frac{b}{a^Ta} a$).

The following lemma demonstrates how the intersection of an $n$-dimensional $\ell_2$-normotope with an affine subspace can be represented as a $(n-1)$-dimensional $\ell_2$-normotope in the basis $B$, whose $(n-1)\times(n-1)$ shaping matrix is simply the $R$ matrix of the QR decomposition of $\alpha B$.

\begin{lemma}[Slicing an $\ell_2$-normotope] \label{lem:slicing}
Let $\normo{\ox,\alpha,y}_2\subseteq\R^n$ be an $\ell_2$-normotope, $B\in\R^{n\times (n-1)}$ be a basis for a $(n-1)$-dimensional subspace, and $x'\in\R^n$ be an offset vector.
The intersection is characterized as follows,
\begin{align*}
    &\normo{\ox,\alpha,y}_2 \cap \{Bz + x' : z\in\R^{n-1}\} \\
    &= \begin{cases}
        B\normo{\ringz,R,\sqrt{y^2 - r}}_2 + x', & y^2 \geq r \\
        \emptyset & y^2 < r
    \end{cases},
\end{align*}
where $QR = \alpha B$ is the reduced QR-decomposition of $\alpha B$, $\ringz = R^{-1}Q^T\alpha(\ox - x')$, and $r = \|\alpha(\ox - x')\|_2^2 - \|Q^T \alpha(\ox - x')\|_2^2$.
\end{lemma}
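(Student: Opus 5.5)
Substituting $x = Bz + x'$ into the defining inequality of the $\ell_2$-normotope reduces the problem to a least-squares decomposition in $z$. A point $Bz + x'$ lies in $\normo{\ox,\alpha,y}_2$ if and only if
\begin{align*}
\|\alpha(Bz + x' - \ox)\|_2^2 \leq y^2 .
\end{align*}
Set $w := \alpha(\ox - x')$. The condition becomes $\|\alpha B z - w\|_2^2 \le y^2$. Since $\alpha\in GL(n)$ and $B$ has full column rank, $\alpha B$ has full column rank $n-1$. Hence the reduced QR decomposition $\alpha B = QR$ exists, with $Q\in\R^{n\times(n-1)}$ having orthonormal columns and $R$ upper triangular and invertible. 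This invertibility is needed for both $\ringz$ and the normotope $\normo{\ringz,R,\cdot}_2$ to be well defined, so I would verify it first.

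Next I would split $w$ into its component in $\operatorname{range}(Q)$ and its orthogonal complement: $w = QQ^Tw + (I-QQ^T)w$. Because $QRz - QQ^Tw$ lies in $\operatorname{range}(Q)$ and $(I-QQ^T)w$ is orthogonal to it, the Pythagorean theorem gives
\begin{align*}
\|QRz - w\|_2^2 = \|Q(Rz - Q^Tw)\|_2^2 + \|(I - QQ^T)w\|_2^2 .
\end{align*}
Since $Q^TQ = I$, the first term equals $\|Rz - Q^Tw\|_2^2 = \|R(z - R^{-1}Q^Tw)\|_2^2 = \|R(z-\ringz)\|_2^2$. For the second term, $Q^TQ=I$ also gives $\|(I-QQ^T)w\|_2^2 = \|w\|_2^2 - \|Q^Tw\|_2^2 = r$. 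So the membership condition is exactly
\begin{align*}
\|R(z - \ringz)\|_2^2 \le y^2 - r ,
\end{align*}
where $r\ge 0$.

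The two cases then follow. If $y^2 < r$, the left side is nonnegative and the right side is negative, so no $z$ satisfies the condition and the intersection is empty. If $y^2\ge r$, taking square roots gives $\|R(z-\ringz)\|_2\le\sqrt{y^2-r}$, which says $z\in\normo{\ringz,R,\sqrt{y^2-r}}_2$. Mapping back through $z\mapsto Bz+x'$ gives the stated set. This map is a bijection onto the affine subspace because $B$ has full rank.

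The only delicate step is the Pythagorean split together with the identity $\|(I-QQ^T)w\|^2=\|w\|^2-\|Q^Tw\|^2$, which relies on $Q^TQ=I$ for the reduced factorization. When $y^2=r$, the paper's definition of a normotope requires a positive offset. In that boundary case I would simply read $\normo{\ringz,R,0}_2$ as the singleton $\{\ringz\}$, which is consistent with the inequality above.
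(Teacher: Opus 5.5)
Your proof is correct and follows essentially the same route as the paper. The paper expands $\|\alpha(Bz-\ox)\|_2^2$ and completes the square using $(\alpha B)^T(\alpha B)=R^TR$, which is the algebraic form of your Pythagorean split with $QQ^T$; it also treats $x'$ by translating to the case $x'=0$, whereas you carry $w=\alpha(\ox-x')$ throughout, and your checks that $R$ is invertible, that $r\geq 0$, and that the boundary case $y^2=r$ needs $\normo{\ringz,R,0}_2$ read as $\{\ringz\}$ fill in details the paper leaves implicit.
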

\begin{proof}
We first prove the $x'=0$ case. The set corresponds to any $z\in\R^{n-1}$ satisfying 
\begin{align} \label{eq:lem_slicing_cond1}
    z^T(\alpha B)^T (\alpha B)z - 2(\alpha\ox)^T \alpha B z \leq y^2 - \ox^T\alpha^T\alpha\ox.
\end{align}
Letting $\alpha B = QR$ be the reduced QR decomposition ($Q\in\R^{n\times (n-1)}$ with orthonormal columns, and $R\in\R^{(n-1)\times(n-1)}$), we see that $(\alpha B)^T (\alpha B) = R^T Q^T QR = R^TR$. Completing the square, the LHS of \eqref{eq:lem_slicing_cond1} is equal to
\begin{align*}
    z^T R^T R z &- 2 (\alpha\ox)^T QR z = - \ox^T\alpha^TQ Q^T \alpha\ox\\
    &+ (z - R^{-1}Q^T \alpha \ox)^T R^T R (z - R^{-1}Q^T \alpha\ox).
\end{align*}
Adding the offset to the RHS, we obtain the condition
\begin{align*}
    \|R(z - R^{-1}Q^T\alpha\ox)\|_2^2 \leq y^2 - \underbrace{(\|\alpha\ox\|_2^2 - \|Q^T\alpha\ox\|_2^2)}_{=:r}.
\end{align*}
When $y^2 < r$, there are no $z$ satisfying the condition because the LHS is nonnegative (thus, the intersection is empty). When $y^2 \geq r$, a square root of both sides shows $z$ lives in the normotope subset
\begin{align*}
    \normo{R^{-1}Q^T\alpha\ox, R, \sqrt{y^2 - r}}_2 \subseteq\R^{n-1}.
\end{align*}
When $x'\neq 0$, translate the subspace and the normotope so the subspace passes through the origin, apply the previous case to $\ox - x'$, and shift back to complete the proof.
\end{proof}

\begin{theorem} \label{thm:mixed_guard}
Consider the setting of Theorem \ref{thm:forward_invariance} with $\|\cdot\|_2$.
Let $\Delta^B : \R^{n-1}\to\R^n$ denote $z\mapsto\Delta(Bz + \ox(T))$, the reset map as a function of the basis for the affine subspace $\{h = 0\}$, centered at $\ox(T)\in\{h = 0\}$ for $T := T_I(x^*)$.
For every $t\in[\ulT,\olT]$, let $R(t)$, $\ringz(t)$, and $r(t)$ be defined as Lemma \ref{lem:slicing} for the normotope $\normo{\ox(t),\alpha(t),y(t)}_2$, and let $\calT = \{t\in[\ulT,\olT] : y(t)^2 \geq r(t)\}$ denote the times with nonempty intersection.
Let $\{M^\Delta_j(t)\}_{j}\subseteq\R^{n\times (n-1)}$ define a collection of corners satisfying the linear inclusion
\begin{align*}
    \Delta^B(z) - \Delta^B(\ringz(t)) \in \co\{M^\Delta_j(t)\}_j (z - \ringz(t)),
\end{align*}
for every $t\in\calT$ and $z\in\normo{\ringz(t),R(t),\sqrt{y(t)^2 - r(t)}}_2$.
Then 
\begin{align*}
    \calI_t = \begin{cases}
        B\normo{\ringz(t),R(t),\sqrt{y(t)^2 - r(t)}}_2 + \ox(T) & t\in\calT \\
        \emptyset & t\notin\calT
    \end{cases},
\end{align*}
and the gain $\gamma$ from Theorem \ref{thm:forward_invariance} can be bounded as follows,
\begin{align*}
    \gamma  
    &\leq \sup_{t\in\calT}  \max_{j} \Big(\|\alpha_0(\Delta^B(\ringz(t)) - x^*)\|_2 \\
    & \quad \quad + \|\alpha_0 M^\Delta_j(t) R(t)^{-1}\|_{2\to 2} \sqrt{y(t)^2 - r(t)} \Big).
\end{align*}
\end{theorem}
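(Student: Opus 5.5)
The plan has two parts. First, obtain the characterization of $\calI_t$ directly from Lemma~\ref{lem:slicing}. Then push each slice through the linear inclusion for $\Delta^B$ and use the triangle inequality.

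\textbf{Slices.} Since $h$ is affine and $\ox(T)\in\{h=0\}$ (because $\varphi_T(x^*)\in\calS$ and $\ox(t)=\varphi_t(x^*)$ by the embedding dynamics), we can use $\ox(T)$ as the distinguished point. This gives $\{h=0\}=\{Bz+\ox(T):z\in\R^{n-1}\}$. For each $t\in[\ulT,\olT]$ we apply Lemma~\ref{lem:slicing} to $\normo{\ox(t),\alpha(t),y(t)}_2$ with offset $x'=\ox(T)$. The slice is $B\normo{\ringz(t),R(t),\sqrt{y(t)^2-r(t)}}_2+\ox(T)$ when $y(t)^2\ge r(t)$, i.e.\ when $t\in\calT$, and it is empty otherwise. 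This is exactly the claimed formula for $\calI_t$.

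\textbf{Bounding $\gamma$.} The supremum defining $\gamma$ ranges over $x\in\calI_t$ with $t\in[\ulT,\olT]$. Times $t\notin\calT$ contribute nothing, so we may restrict to $t\in\calT$. Fix such a $t$ and write $x=Bz+\ox(T)$ with $z$ in the slice normotope, so that $\Delta(x)=\Delta^B(z)$. By the assumed inclusion there is some $M\in\co\{M^\Delta_j(t)\}_j$ with $\Delta^B(z)=\Delta^B(\ringz(t))+M(z-\ringz(t))$. The triangle inequality then gives
\[
\|\alpha_0(\Delta(x)-x^*)\|_2\le \|\alpha_0(\Delta^B(\ringz(t))-x^*)\|_2+\|\alpha_0 M(z-\ringz(t))\|_2 .
\]
For the second term we insert $R(t)^{-1}R(t)$:
\[
\|\alpha_0 M R(t)^{-1}R(t)(z-\ringz(t))\|_2\le \|\alpha_0 M R(t)^{-1}\|_{2\to2}\sqrt{y(t)^2-r(t)},
\]
using $\|R(t)(z-\ringz(t))\|_2\le\sqrt{y(t)^2-r(t)}$ from membership in the slice.

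\textbf{Reducing to corners.} It remains to replace $M$ by a corner. The map $M\mapsto\|\alpha_0 M R(t)^{-1}\|_{2\to2}$ is a composition of a linear map with a norm, so it is convex. Its maximum over the polytope $\co\{M^\Delta_j(t)\}_j$ is therefore attained at some $M^\Delta_j(t)$. We bound the second term by the max over $j$, add the first term (which does not depend on $j$), and take the sup over $z$ and then over $t\in\calT$. This yields the stated bound.

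\textbf{Main obstacle.} The only real subtlety is invertibility of $R(t)$. It is needed for $\ringz(t)$ in Lemma~\ref{lem:slicing}, and it is also needed here. It holds because $\alpha(t)$ is invertible (the embedding flow $\dot\alpha=-\alpha D\fcl$ preserves $GL(n)$) and $B$ has full column rank, so $\alpha(t)B$ has full column rank and its reduced QR factor $R(t)$ is nonsingular.
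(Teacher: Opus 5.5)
Your proposal is correct and follows essentially the same route as the paper: you apply Lemma~\ref{lem:slicing} with offset $\ox(T)$ to characterize $\calI_t$, then combine the linear inclusion for $\Delta^B$, the triangle inequality, the factorization through $R(t)^{-1}R(t)$ with the induced $2\to 2$ norm, and convexity of $M\mapsto\|\alpha_0 M R(t)^{-1}\|_{2\to2}$ to reduce to the corners $M^\Delta_j(t)$. You also note that $R(t)$ is nonsingular because $\alpha(t)\in GL(n)$ and $B$ has full column rank; the paper leaves this detail implicit.
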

\begin{proof}
By Lemma \ref{lem:slicing},
$\calI_t = \emptyset$ for $t\notin\calT$ and $\calI_t = B\normo{\ringz(t),R(t),\sqrt{y(t)^2 - r(t)}}_2 + \ox(T)$ for $t\in\calT$, thus,
\begin{align*}
    \gamma &\leq \sup_{z\in\normo{\ringz(t),R(t),\sqrt{y(t)^2 - r(t)}}_2,t\in\calT} \|\alpha_0(\Delta(Bz + \ox(T)) - x^*)\|.
\end{align*}
Fix $t\in\calT$ and $z\in\normo{\ringz(t),R(t),\sqrt{y(t)^2 - r(t)}}_2$; then there exists $M\in\co\{M^\Delta_j(t)\}_j$ such that $\Delta^B(z) - \Delta^B(\ringz(t)) = M (z - \ringz(t))$.
Applying the triangle inequality,
\begin{align*}
    &\|\alpha_0(\Delta(Bz + \ox(T)) - x^*\|_2 = \|\alpha_0(\Delta^B(z) - x^*)\|_2 \\
    &\leq \|\alpha_0(\Delta^B(z) - \Delta^B(\ringz(t)))\|_2 + \|\alpha_0(\Delta^B(\ringz(t)) - x^*)\|_2 \\
    &= \|\alpha_0 M (z - \ringz(t))\|_2 + \|\alpha_0(\Delta^B(\ringz(t)) - x^*)\|_2.
\end{align*}
Furthermore, since $\alpha_0 M (z - \ringz(t)) = \alpha_0 M R(t)^{-1}R(t)(z - \ringz(t))$, using the induced matrix norm between $(\R^{n-1},\|\cdot\|_2)$ and $(\R^{n},\|\cdot\|_2)$,
\begin{align*}
    &\|\alpha_0 M (z - \ringz(t))\|_2 \leq \|\alpha_0 M R(t)^{-1}\|_{2\to 2} \|R(t)(z - \ringz(t))\|_2 \\
    &\leq \|\alpha_0 M R(t)^{-1}\|_{2\to 2} \sqrt{y(t)^2 - r(t)} \\
    &\leq \max_{j} \|\alpha_0 M^\Delta_j(t) R(t)^{-1}\|_{2\to 2} \sqrt{y(t)^2 - r(t)}
\end{align*}
using convexity of the induced matrix norm to bound $M\leq\sup_{M^\Delta\in\co\{M_j^\Delta\}_j} \|\alpha_0M^\Delta R(t)^{-1}\|_{2\to 2} = \max_{j} \|\alpha_0M^\Delta_j R(t)^{-1}\|_{2\to 2}$.
\end{proof}

\begin{figure}
    \centering
    \includegraphics[width=\linewidth]{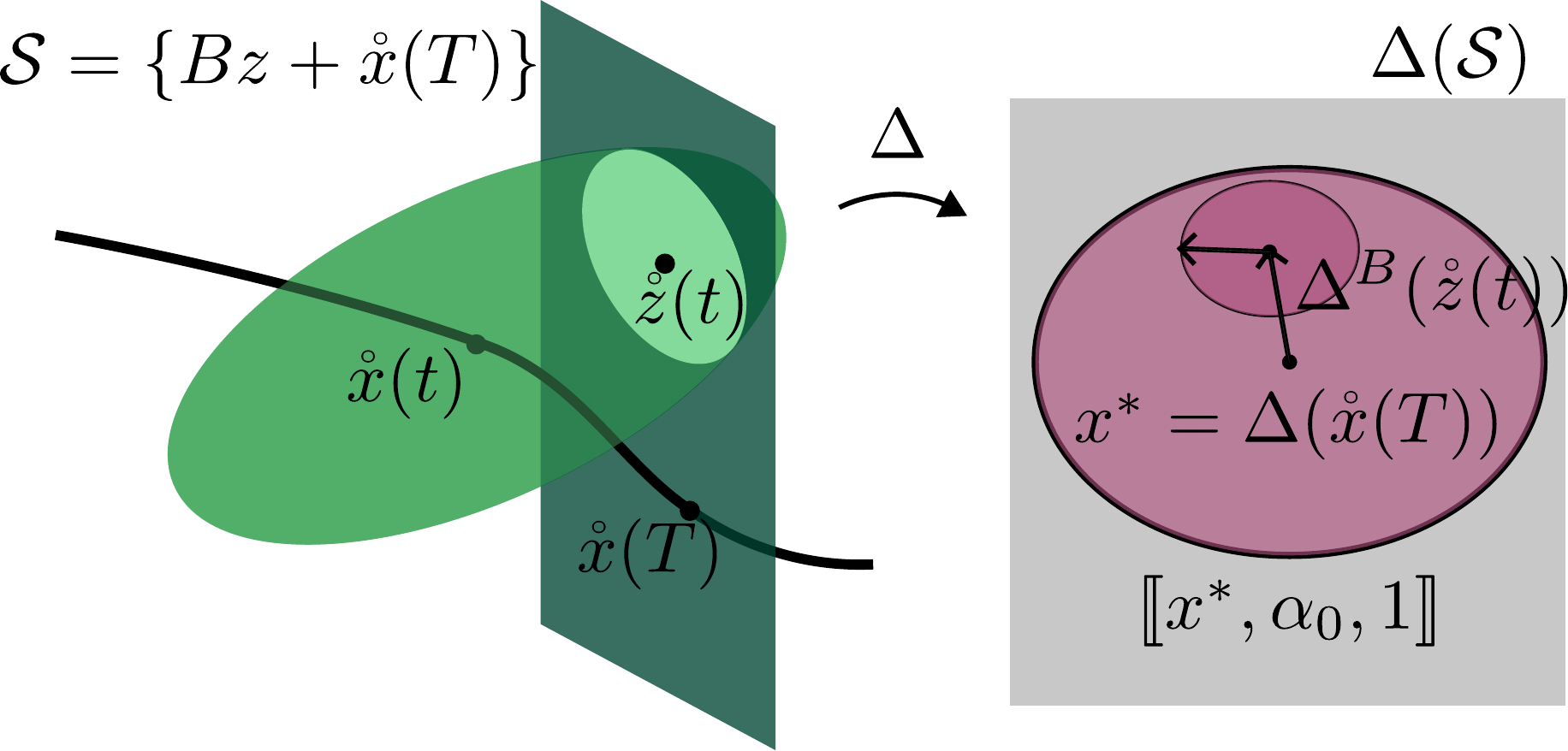}
    \caption{The slicing and bounding procedure used in Lemma \ref{lem:slicing} and Theorem \ref{thm:mixed_guard} is pictured. 
    \textbf{Left}: the $\ell_2$-normotope at time $t$ is sliced with the subspace $\{h=0\}= \{Bz + \ox(T)\}$ to obtain a $(n-1)$-dimensional $\ell_2$-normotope expressed in the basis $B$ (Lemma \ref{lem:slicing}).
    \textbf{Right}: The gain $\gamma$ is bounded using (i) the distance from $x^*$ to the reset center of the sliced normotope $\Delta^B(\ringz(t))$ measured in the initial $\alpha_0$ norm, and (ii) the maximum induced matrix norm of a linear inclusion bounding $\Delta^B(z) - \Delta^B(\ringz(t))$ to obtain a radius in the $\alpha_0$ norm.
    }
    \label{fig:theorem1}
\end{figure}

\section{Algorithmic Computation of Forward Invariant Sets}
\label{sec:implementation}
In this section, we describe our implementation of set-based reachability, which builds upon \immrax{} \cite{harapanahalli2024immrax}.
Our approach promises computational efficiency, scalability, and differentiability, unlocking future applications to higher-dimensional systems.
We model reachable sets as $\ell_2$-normotopes of the form $\normo{\ox,\alpha,y}_2 = \{x : \|\alpha(x - \ox)\| \leq y\}$, which correspond to ellipsoids.



\subsection{Obtaining an Initial Set}
\label{sec:initial_set}
Suppose we are given a closed-loop hybrid system~\eqref{eq:cl_hybridsys} with a fixed point $x^*$ of the step-to-step map~\eqref{eq:step-to-step}, corresponding to a periodic orbit $\mathcal{O}$ with period $T$. 
Before applying Theorem~\ref{thm:forward_invariance}, a natural question is how to obtain a suitable shape matrix $\alpha_0$ to initialize the parametric embedding system, \ie, a suitable initial ellipsoid in the $\ell_2$ case.

To find a suitable shaping matrix $\alpha_0$ defining the initial normotope set $\normo{x^*,\alpha_0,1}$ for Theorem~\ref{thm:forward_invariance}, we apply ideas from contraction theory.
Let $A$ be the linearization of the closed-loop step-to-step system about the fixed point, that is, $A=DF(x^*)$.
Setting $b = \rho(A) + \varepsilon$ for a small constant $\varepsilon>0$, we solve the following semi-definite program to find a norm which decays at rate $b$, which is the minimum contraction rate \cite[Eq. (2.36b)]{bullo2022contraction}
\begin{align}\label{eq:initial_set_SDP}
    \min_{\bfP\succ \bfI} & \quad \operatorname{trace}(\bfP) \quad  \text{s.t. } \quad A^T \bfP A \preceq b^2 \bfP.
\end{align}
Setting $\alpha_0 = \bfP^{1/2}$ using the positive semi-definite matrix square root, the resulting norm $\|\alpha_0 x\|_2$ matches the optimal norm $\sqrt{x^T \bfP x}$. 



\subsection{Computing the Continuous Reachable Tube}
\label{sec:reachable_tube}
Once we have an initial set $\normo{x^*,\alpha_0,1}_2$, we obtain a reachable tube by forward-simulating the dynamical embedding system~\eqref{eq:normo_embed}.
We simulate the parametric embedding system using Euler integration, generating a normotope trajectory $\normo{\ox_i, \alpha_i, y_i}$.
We simulate the trajectory until the nominal trajectory intersects the guard surface, and then further until the normotope fully passes through the guard surface, corresponding to the time $\olT$.



\subsection{Accounting for the Guard Surface and Reset Map}
\label{sec:guard_and_reset}
To compute the reachable set through the hybrid dynamics, we check intersection of each ellipsoid in the normotope trajectory with the guard surface, yielding a set of intersecting normotopes.
To each of the intersecting normotopes, we apply the linear inclusion of the  reset map and apply Theorem \ref{thm:mixed_guard} to find the upper bound of the overall gain $\gamma$.

\subsection{Resizing the Invariant Tube}
\label{sec:resizing}
The procedure described in Sec. \ref{sec:guard_and_reset} can be used to verify the invariance of a tube defined by its initial shaping matrix, but does not necessarily imply that such a tube is the largest possible one.
Additionally, the initial choice of shaping matrix $\alpha_0$ derived from contraction theory is only accurate up to a scale.
Therefore, we introduce a bisection-based algorithm to increase the size of the tube while preserving its forward invariance. 

Consider a function $\Phi: GL(n)\rightarrow \mathbb{R}$ which maps an initial shaping matrix to the associated post-impact length, as in Theorem \ref{thm:forward_invariance}.
Verifying that the hybrid reachable tube associated with the initial normotope shaped by $\alpha_0$ is forward invariant is equivalent to ensuring that $\Phi(\alpha_0)<1$.
This normotope can be uniformly scaled by dividing $\alpha_0$ by some factor $s$.
We select the scale factor according to the optimization problem
\begin{align}\label{eq:scaling_bisection}
    \max & \quad s \quad  \text{s.t. } \Phi(\alpha_0/s) < 1 .
\end{align}
This amounts to a single-parameter search and can thus be easily solved using a bisection algorithm.

\subsection{Software Implementation}
The implementation of our set-based reachability approach builds upon \immrax~ and is fully compatible with the JAX software ecosystem, enabling parallelism, traceability, and automatic differentiation.

We numerically implement $F(x)$ by simulating the dynamics of \eqref{eq:cl_hybridsys:cont} starting at $x$ until the trajectory intersects the guard surface, using the event-based integrator in \verb|diffrax|~\cite{kidger2021on} to accurately determine the guard intersection point $x^-$.
The reset function \eqref{eq:cl_hybridsys:disc} is then applied to $x^-$, yielding a new point $\Delta(x^-) = F(x)$.
Using the automatic differentiation capability of JAX, we can additionally compute the Jacobian of $F(x)$, used in Sec. \ref{sec:initial_set}.

We also implement the steps described in Sec. \ref{sec:reachable_tube} and \ref{sec:guard_and_reset} using the normotope toolbox in \immrax.
The benefit of our approach is that every step in the reachable set computation can be automatically differentiated with respect to its inputs, which ultimately allows us to compute the gradient of $\gamma$ with respect to system parameters.
This property can be leveraged in control design, as demonstrated in Sec. \ref{sec:tracking_control}.
An implemention of the entire framework is provided as a Python notebook\footnote{\href{https://github.com/dynamicmobility/Hybrid-Invariant-Sets.git}{https://github.com/dynamicmobility/Hybrid-Invariant-Sets.git}}.


\section{Application to Planar Bipedal Walker}
\label{sec:walker}
We apply our methodology on a simple planar bipedal walker.
This model takes inspiration from the rimless-wheel and the compass-gait walker, but instead features a telescoping leg to better model the effect of a knee joint.
Since we are proposing a novel model, we will first introduce our system and propose a closed-loop controller under which the system has a stable periodic limit cycle.


\begin{figure}[t]
    \centering
    \begin{subfigure}{0.32\linewidth}
        \centering
        \includegraphics[width=\linewidth]{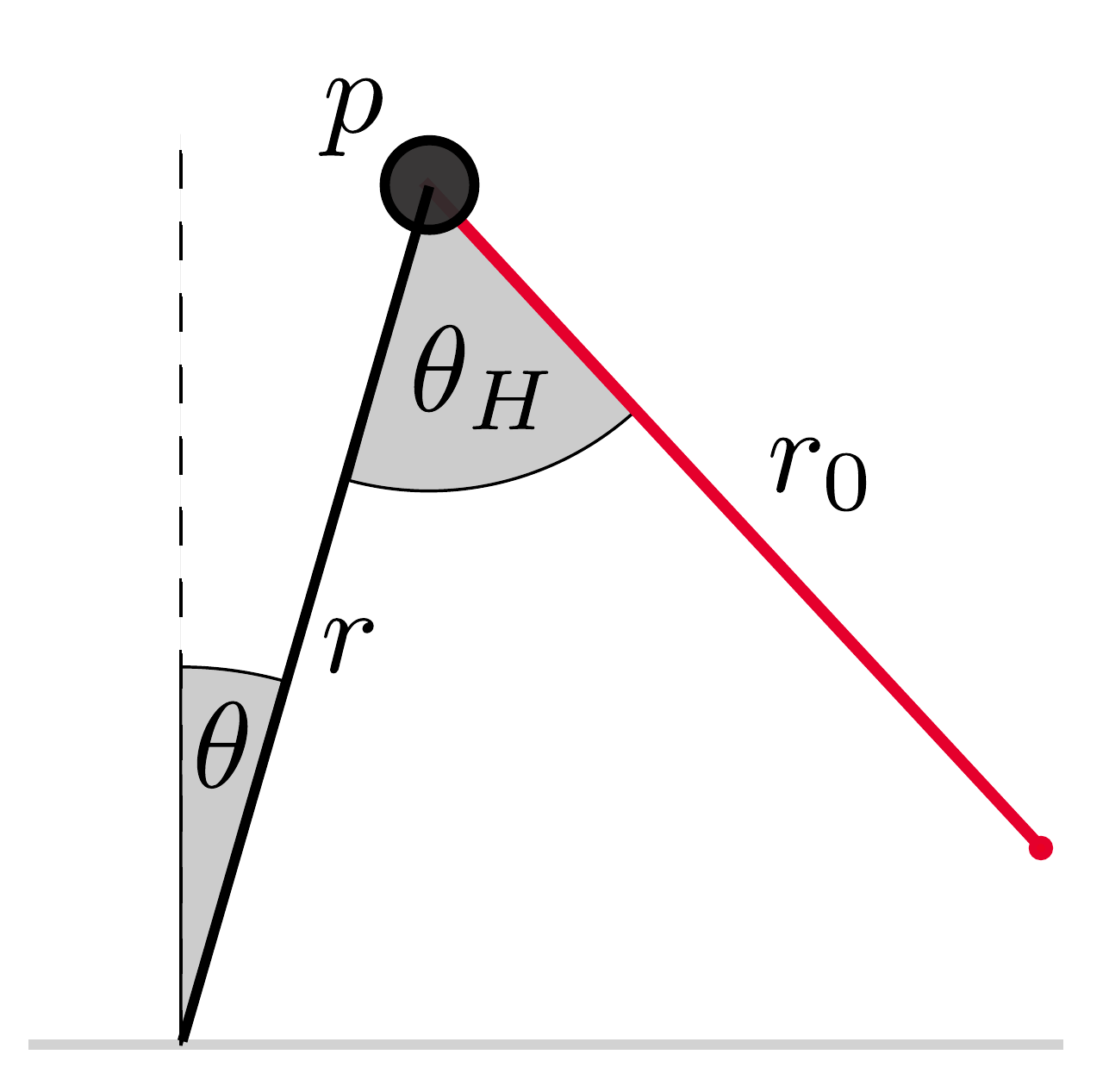}
        \caption{}
        \label{fig:continuous}
    \end{subfigure}\hfill
    \begin{subfigure}{0.32\linewidth}
        \centering
        \includegraphics[width=\linewidth]{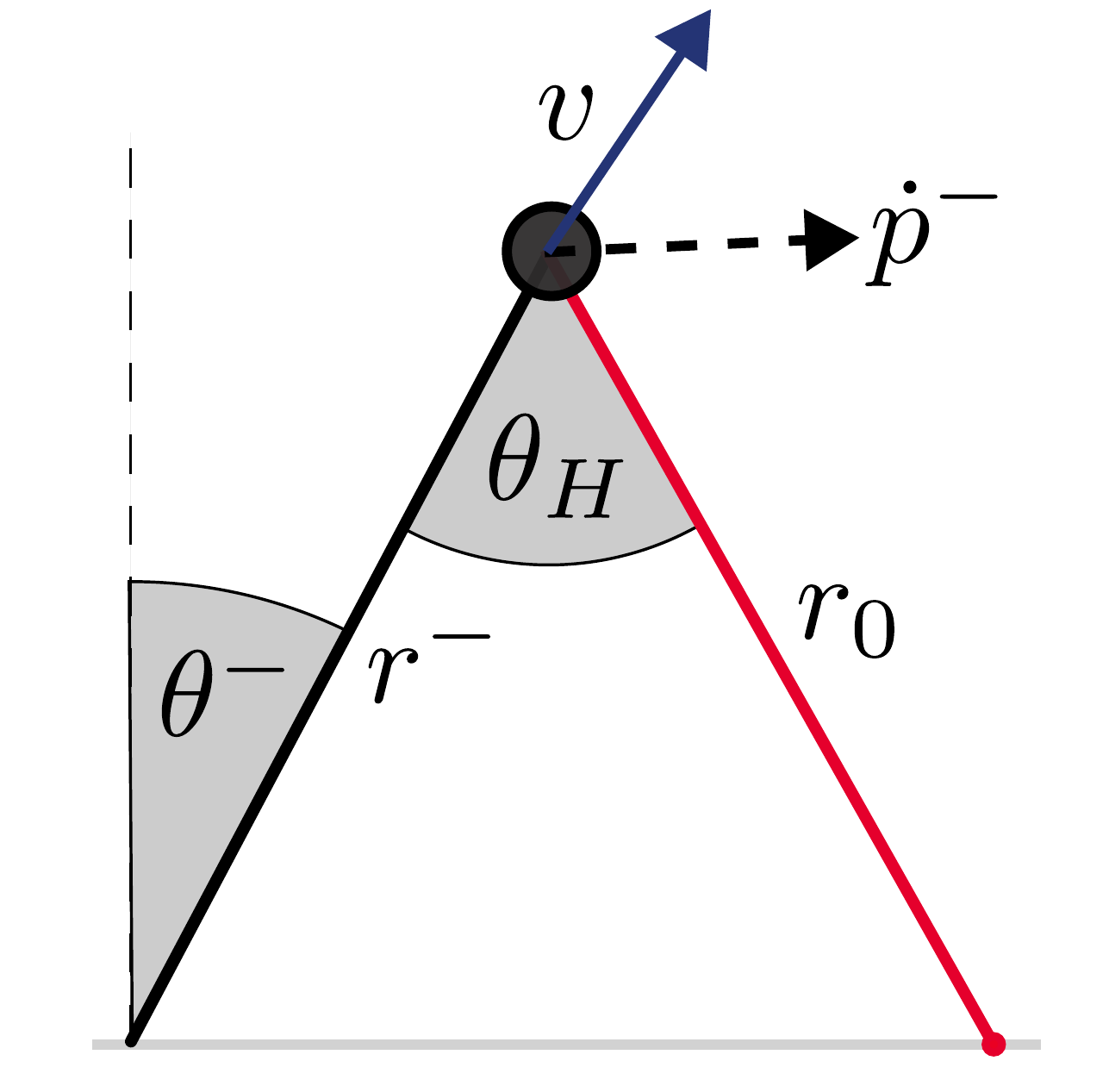}
        \caption{}
        \label{fig:preimpact}
    \end{subfigure}\hfill
    \begin{subfigure}{0.32\linewidth}
        \centering
        \includegraphics[width=\linewidth]{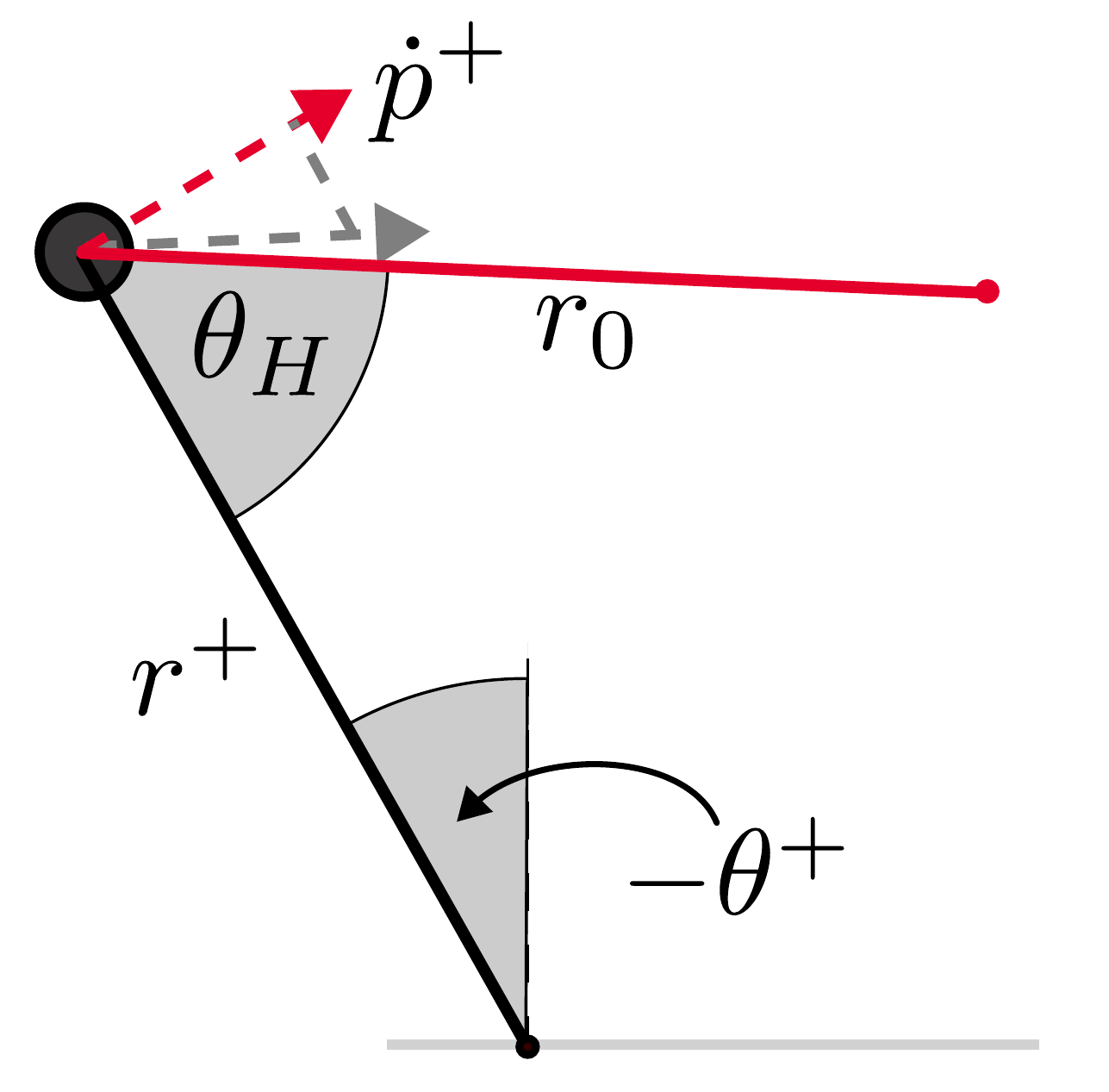}
        \caption{}
        \label{fig:postimpact}
    \end{subfigure}
    \caption{Simplified Bipedal Walker illustrated for the (a) continuous domain, (b) pre-impact, and (c) post-impact}
    \label{fig:model}
\end{figure}

\subsection{Planar Bipedal Model}
\subsubsection{Dynamics}
\label{sec:BipedDynamics}
We model the bipedal walker as an inverted pendulum, with mass $m$ and length $r$, at angle $\theta$ from the vertical. An illustration of this model is provided in Figure \ref{fig:model}.
Over one step, one leg (the stance leg) is pinned to the ground, and the other leg (the swing leg) is held at a constant length $r_0$ and constant angle $\theta_H$ from the stance leg.
The robot is capable of applying a force $u$ along its stance leg.

Under these assumptions, the robot state $[r, \theta, \dot{r}, \dot{\theta}]$ evolves according to the dynamics
\begin{gather}
 \begin{bmatrix}
        \ddot{r} \\
        \ddot{\theta}
    \end{bmatrix} =
        \begin{bmatrix}
         (mr\dot{\theta}^2 +  u - mg\cos{\theta})/m \\
        -(2\dot{r}\dot{\theta} - g\sin{\theta})/r
    \end{bmatrix}.
\end{gather}

When the swing leg contacts the ground, an inelastic collision occurs and a stance change is triggered.
All momentum in the direction of the swing leg is removed, a controlled impulse of $v$ is applied in the direction of the stance leg, and the swing leg is relabeled as the new stance leg.
The controlled impulse models the effect of action applied by the old stance leg during a short dual-stance phase, and can be used to recover the energy lost from stepping as well as stabilize the step-to-step dynamics.

The switching region $\mathcal{S}$ of this system can be described in terms of the position variables:
\begin{gather}
    S = \{r, \theta: r\cos(\theta) + r_0\cos(\pi + \theta_H - \theta) = 0\}.
\end{gather}
The reset map $\Delta$ maps the pre-impact states to post-impact states.
The post-impact positions are fixed by the geometry of the system at impact:
\begin{gather}
    \begin{bmatrix}
        r^+\\\theta^+
    \end{bmatrix} 
    = 
    \begin{bmatrix}
        r_0 \\ \theta_H - \theta
    \end{bmatrix}.
\end{gather}
Applying the inelastic collision, controlled impulse, and relabeling results in the post-impact velocities
\begin{gather}
    \begin{bmatrix}
        r\dot{\theta}^+\\
        \dot{r}^+
    \end{bmatrix} = 
    R(-\theta)
    \begin{bmatrix}
        v\sin(\theta) + [r\dot{\theta}\cos(\theta_H) + \dot{r}\sin(\theta_H)]\cos(\beta) \\
        v\cos(\theta) - [r\dot{\theta}\cos(\theta_H) + \dot{r}\sin(\theta_H)]\sin(\beta)
    \end{bmatrix},
\end{gather}
where $\beta=\theta-\theta_H$ and $R\in SO(2)$ is a 2D rotation matrix.

\subsubsection{Transformed System Dynamics}
To simplify the computation of intersections with the guard surface, we transform our coordinates such that the guard surface becomes a hyperplane.
This is done through a map $\phi: \mathbb{R}^4 \to \mathbb{R}^4$.
We define a transformed coordinate system 
\begin{gather}
x = \begin{bmatrix}
r, \tan{\theta}, \dot{r}, \frac{\dot{\theta}}{\cos^2 \theta}
\end{bmatrix}^T.
\end{gather}
Under these coordinates, the guard surface is defined as
\begin{gather}
    \mathcal{S}_x = \left\{x : 
    \begin{bmatrix}
        \frac{-1}{r_0 \sin \theta_H} & 1 & 0 & 0
    \end{bmatrix}
    x + \tfrac{1}{\tan \theta_H} = 0
    \right\}.
\end{gather}

The reset function can be similarly defined as $x^+=\Delta_x(x^-) = \phi(\Delta(\phi^{-1}(x^-))$.
We note that the singularities of the $\phi$ map are at $\pm \pi/2$, which is outside the operating envelope of the system.

\begin{figure}
         \centering
         \includegraphics[width=0.6\columnwidth]{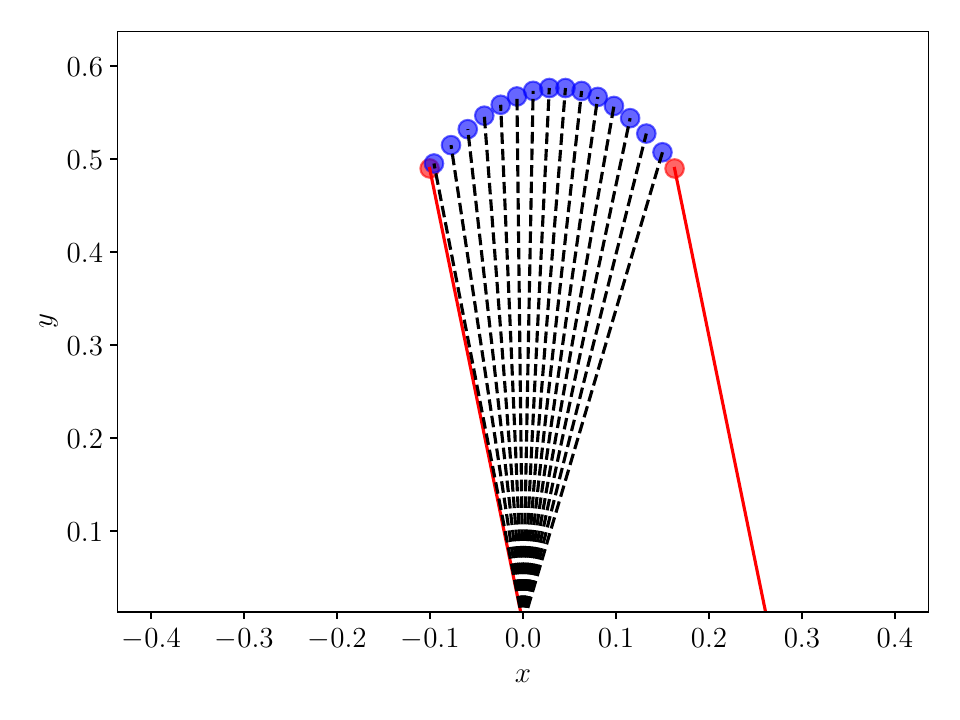}
         \label{fig:1step_traj_robot}
    \caption{Stable periodic limit cycle for the walker obtained from trajectory optimization.}
    \label{fig:trajopt}
\end{figure}
\subsubsection{Open-Loop Trajectory Optimization}
\label{sec:TrajOpt}
To realize walking for our model, we synthesize a periodic trajectory and stabilizing tracking controller.
Our trajectory optimization is formulated to choose an initial state $x_0^d$, feed-forward control sequence $u^{\text{ff}}_{i} ~\forall i \in 1..N-1$, and nominal dual-stance impulse $v^{\text{ff}}$.
From these specifications we can also compute a desired trajectory $x^d_i$.
It is transcribed as the single-shooting trajectory optimization problem
\begin{align}
    & \min_{x^d_0, u^{\text{ff}}_i, v^{\text{ff}}} && \sum_{i=0}^N (u^{\text{ff}}_i)^2 \\
    &~~~ \textrm{s.t.} && x^d_{i+1} = x^d_i + hf(x^d_i, u^{\text{ff}}_i), &&& \forall i \in 0,...,N-1,\notag \\
    & && \Delta(x^d_N, v^{\text{ff}}) = x^d_0, &&& \notag \\
    & && r_i \geq r_0, &&& \forall i \in 0,...,N, \notag \\
    & && x^d_N \in \mathcal{S}, \quad x^d_i \notin \mathcal{S}, &&& \forall i \in 0..N-1, \notag 
\end{align}
where $h$ is the discretization interval of the trajectory, and $N$ is the number of samples of the trajectory.
The problem is solved using the IPOPT solver through the \texttt{cyipopt} library \cite{moore2025cyipopt}.
The resulting nominal trajectory is shown in Figure \ref{fig:trajopt}.

\subsubsection{Step-to-Step Control}
\label{sec:step_control}

To stabilize the step-to-step dynamics, we introduce a discrete-time controller which adjusts the dual-stance impulse $v$.
For the open-loop trajectory computed in \ref{sec:TrajOpt}, we numerically estimate the step-to-step dynamics by flowing the continuous dynamics and reset map from \ref{sec:BipedDynamics} over one step, starting at perturbed initial conditions $x_0$ and perturbed impulse input $v$.
From the difference in initial and final points, we can approximate the step-to-step dynamics as
\begin{gather}
    x[k+1] - x^d_N = A(x[k]-x^d_N) + B(v^d + \delta v),
\end{gather}
where $\delta v$ is an additional control input term which adjusts $v$ from the nominal quantity $v^d$.
To stabilize this system, we select a controller $\delta v = K_{ds}(x^-_k-x^d_N)$ to place the eigenvalues of $F$ within the unit circle.

\subsubsection{Tracking Control}
\label{sec:tracking_control}
The planar bipedal walker can be stabilized onto a periodic gait using only the step-to-step controller outlined in Sec. \ref{sec:step_control}.
This control strategy is brittle and yields a relatively small invariant tube.
Thus, we develop a tracking controller designed to increase the size of the hybrid invariant tube by leveraging the automatic-differentiation capabilities of our framework.
We choose to develop controllers of the form $u(t) = u_{\text{ff}}(t) + K[x(t)-x^d(t)]$ and select the control gain $K$.

\begin{algorithm}
\caption{Reachability-based Control Design Procedure}
\label{alg:control_update}
\begin{algorithmic}
\Require Initial shape $\alpha_0$, Rescale tolerance $s_{\text{min}}$, Gradient Step Size $\eta$, Gradient Step Count $N$
\State $K^*\gets [0, 0, 0, 0]$
\State $s^* \gets \infty$
\State $\alpha^*\gets\alpha_0$
\While {$s^* > s_{\text{min}}$} 
    \For {$i = 1,...,N$} 
        \State $K^* \gets K^* - \eta \frac{\partial}{\partial K^*} \Phi'(\alpha^*, K^*)$
    \EndFor
    \State $s^* \gets  \max \quad s \quad  \text{s.t. } \Phi'(\alpha^*/s, K^*) < 1$
    \State $\alpha^* \gets \alpha^*/s$
\EndWhile
\State \Return $\alpha^*, K^*$
\end{algorithmic}
\end{algorithm}

First, we define the cost function $\Phi':GL(n) \times \mathbb{R}^n\rightarrow\mathbb{R}$ which, analogous to the function $\Phi$ defined in Sec. \ref{sec:resizing}, maps an initial shaping matrix to the associated post-impact length, given a particular choice of control gains.
We optimize the control gain $K^*$ and shaping matrix $\alpha^*$ in a bilevel fashion.
First, we iteratively update $K$ using gradient descent to decrease the cost function while holding $\alpha^*$ constant.
The gradient of the cost function, $\frac{\partial}{\partial K^*}\Phi'(\alpha^*, K^*)$ is obtained using automatic differentiation.

After a fixed number of gradient steps, we hold $K^*$ constant and optimize over $\alpha^*$ by solving the optimization problem \eqref{eq:scaling_bisection}.
This algorithm continues until the rescaling step is unable to increase the size of the initial normotope by more than the user-defined minimum factor $s_{\text{min}}$.
This process is summarized in Algorithm \ref{alg:control_update}.
Note that one can also optimize this cost function with respect to the initial normotope shape $\alpha_0$, leading to potentially an even larger final invariant set.


\section{Results}
\subsection{Hybrid Invariant Tube}
We first analyze the resultant closed-loop system from applying our procedure to the autonomous system described in Sec.~\ref{sec:BipedDynamics}.
For the discrete control used to stabilize the step-to-step dynamics, we place the linearized system's poles at $z=[0.0, 0.1, 0.2, 0.3]$ which stabilize the hybrid system.

We apply our procedure in Sec.~\ref{sec:implementation} to generate an estimate of the hybrid invariant tube, without continuous control applied.
This tube is shown in Figure \ref{fig:reachable-set}.
We empirically verify the invariance of the tube by performing a Monte Carlo simulation.
We initialize 100 trajectories on the boundary of the initial verified normotope and simulate the system for 15 crossings of the guard surface.
As expected, all trajectories successfully remain within the verified invariant tube.

We then design a tracking controller using the procedure outlined in Sec.~\ref{sec:tracking_control}.
With gradient step count $N=20$, step size $\eta=0.5$, and rescale tolerance $s_{\text{min}}=1.01$, Algorithm \ref{alg:control_update} converges to a $K^*$ that increased the size of the invariant tube by a factor of 4.25, as illustrated in Figure \ref{fig:reachable-set}.
We again verify through Monte Carlo sampling that the resulting larger tube is indeed invariant.

\begin{figure}
    \centering
    \begin{subfigure}[b]{.45\textwidth}
        \includegraphics[width=\textwidth]{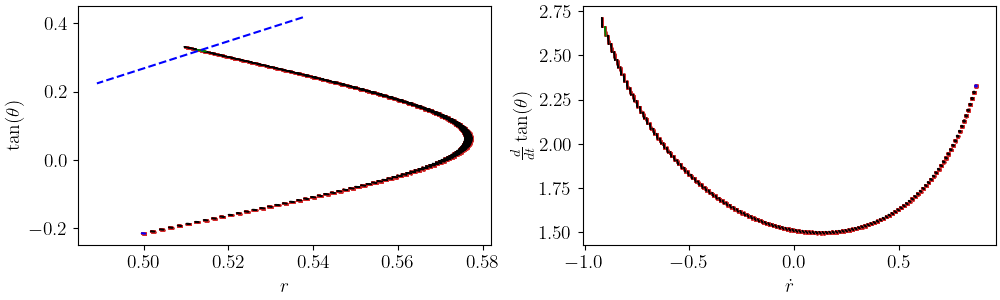}
        \caption{Hybrid Invariant Tube with no tracking control}
    \end{subfigure}
    \begin{subfigure}[b]{.45\textwidth}
        \includegraphics[width=\textwidth]{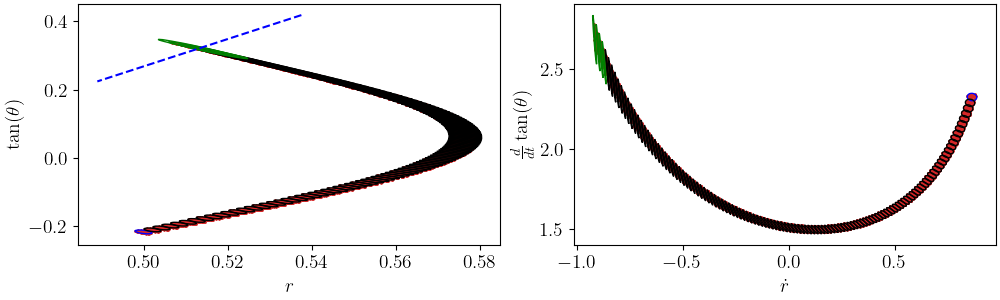}
        \caption{Hybrid Invariant Tube with tracking control included}
    \end{subfigure}
    \caption{Slices of the computed Hybrid Invariant Tube with no tracking control (above) and tracking control (below). The blue dotted line represents the guard surface $\mathcal{S}$. The green portions of the tube are the ones for which the tube intersects the guard surface.}
    \label{fig:reachable-set}
    \vspace{-5mm}
\end{figure}

\addtolength{\textheight}{-15mm}   

\subsection{Computational Efficiency}
A key benefit of our approach is the ability to quickly produce approximated forward-invariant sets.
To compute the reachable sets on our 4-dimensional system takes a total of 19.56 seconds, out of which 4.55 seconds are taken up by JIT compilation.
In contrast, the SOS programming approach taken in \cite{manchester2011transverse} takes 17.7 minutes to compute a reachable set for the 4-dimensional compass walker.
Similarly, the Hamilton-Jacobi-Bellman approach taken in \cite{choi2022computation} takes roughly 36 hours to compute a reachable set for the 4-dimensional compass walker.
While these approaches are not directly comparable, the speed at which we can compute approximate solutions is promising for embedding reachability analysis into control design or trajectory optimization algorithms.

\section{Limitations and Future Work}
One key assumption of this approach is the requirement that the guard surface is linear with respect to the choice of coordinates.
In this work, we found a map $\phi$ which converted the planar bipedal walker's states into coordinates for which this is true.
This map may or may not exist for all hybrid systems that follow the structure in \ref{eq:ol_hybridsys}.
However, we observe that other bipedal walker models such as the compass-gait walker do have a guard surface that is linear in their coordinates.
Future work will investigate the conditions surrounding this coordinate map for other hybrid systems. 

Another limitation is the inherent conservativeness in our approach.
The interval-based linear inclusions that model the evolution of the normotope and the effect of the reset map result in a conservative overapproximation of the true forward set.
However, the approach developed in \cite{harapanahalli2025normotope} shows promise in reducing this conservatism through the introduction of a \textit{controlled embedding system} and an additional tube refinement step.
In future work, we plan to integrate this refinement step into our invariant set computation pipeline.


Finally, our work only proves forward invariance of the identified hybrid tube.
While we know empirically that this set also yields asymptotic stability to a stable periodic orbit, this has yet to be proven.
Despite these limitations, we believe the computational speed and differentiability show promise for more complicated control design problems.



\section{Conclusion}
Our work presents a set-based reachability framework for efficiently estimating the region of attraction of a hybrid system.
We present a condition under which the reachable set of a periodic hybrid system is forward-invariant, and provide a numerical algorithm for computing it.
This condition is differentiable with respect to the initial set and system parameters, permitting reachability-guided control design.
We demonstrate its effectiveness by designing a controller and certifying the forward-invariant set of a planar walker model.
While this demonstration is limited in scale, our approach promises to extend to high-dimensional systems by leveraging interval analysis as well as the parallelization and auto-differentiation capabilities of JAX.
As such, future work will extend this methodology to more complex systems in order to fully demonstrate its benefits.

\bibliographystyle{ieeetr}
\bibliography{references.bib}

\end{document}